\newtheorem{lemma}{Lemma}
\newcommand{\beq}{\begin{equation}}
\newcommand{\eeq}{\end{equation}}
\newcommand{\beqn}{\begin{eqnarray}}
\newcommand{\eeqn}{\end{eqnarray}}
\newcommand{\beqno}{\begin{eqnarray*}}
\newcommand{\eeqno}{\end{eqnarray*}}
\newcommand{\bma}{\begin{displaymath}}
\newcommand{\ema}{\end{displaymath}}
\newcommand{\bnu}{\begin{enumerate}}
\newcommand{\enu}{\end{enumerate}}
\newcommand{\bce}{\begin{center}}
\newcommand{\ece}{\end{center}}
\newcommand{\btb}{\begin{tabular}}
\newcommand{\etb}{\end{tabular}}
\begin{document}

\title{DISCO-Dynamic Interference Suppression for Radar and Communication Cohabitation}
\author{Tan~Le,~\IEEEmembership{Member,~IEEE,} and Van~Le
\thanks{T.~Le is with the School of Engineering, Architecture and Aviation, Hampton University, Hampton, VA 23669, USA. 
Email: tan.le@hamptonu.edu.\\
V.~Le is with the Advanced Technology Center, Virginia Beach, VA 23453, USA. 
Email: lev13408@gmail.com.}}

\maketitle

\begin{abstract}
We propose the joint dynamic power allocation and multi-relay selection for the cohabitation of high-priority military radar and low-priority commercial 5G communication. 
To improve the 5G network performance, we design the full-duplex underlay cognitive radio network for the low-priority commercial 5G network, where multiple relays are selected for concurrently receive the signal from the source and send it to the destination. 
Then, we propose the interference suppression at the high-priority radar system by using both non-coherent and coherent relay cases. 
In particular, we formulate the optimization problem for maximizing the system rate, with the consideration of the power constraints at the 5G users and the interference constraint at the radar system. 
Then, we propose the mathematical analysis model to evaluate the rate performance, considering the impacts of self-interference at the relays and derive the algorithms of joint power allocation and relay selection. 
Our numerical results demonstrate the characteristic of the optimal configuration and the significant performance gain of coherent case with respect to the non-coherent case and the existing algorithms with single relay selections.

\end{abstract}

\begin{IEEEkeywords}
Cohabitation of radar and communication, interference suppression, full-duplex cooperative communications, resource allocation, relay selection scheme, dynamic spectrum sharing.
\end{IEEEkeywords}
\IEEEpeerreviewmaketitle

\section{Introduction}

Cohabitation of military radar and commercial communications is one of the most promising technologies for 5-Generation (5G) networks, attracting significant attention from researchers \cite{Shatov2024, Doly2024}. 
Moreover, cognitive radio is the innovative technology for addressing today's spectrum shortage due to the flawless multimedia applications supported by the 5G networks.
It is challenging that the commercial 5G networks utilize the common spectrum bands for opportunistic data transfer with the military radar network but not cause any possible spectrum risk for both.
In the following discussion, we will interchangeably refer to the commercial 5G network/user as the secondary network/user or the low-priority network/user and the radar network/user as the primary network/user or the high-priority network/user.
One potential solution is using the dynamic spectrum sharing paradigm, where secondary users (SUs) employ the spectrum sensing technique to search for spectrum holes and opportunistically exploit them. 
Note that primary users (PUs) have higher priority than SUs in transmitting data over the underlying spectrum.
Instead of using spectrum sensing, hierarchical spectrum sharing between primary networks and secondary networks (i.e. an underlay cognitive radio network) is also received much attention from researchers.
Furthermore, the relays operate in the full-duplex (FD) mode, i.e. they can simultaneously transmit and receive data on the same spectrum band \cite{Tan17a, Tan15, Tan16, Tan19, Tan18}.
Hence, the system utilizing FD relays achieves higher throughput and lower latency that using the half-duplex (HD) relays.
So, this paper considers full-duplex underlay cognitive radio network (FDUCRN), where radar and commercial 5G networks transmit concurrently over the same spectrum, under the interference constraint at PU receivers \cite{Tan17a}.

There is a unique challenge for design, implementation and analysis of FDUCRNs, comparing to those of HDUCRNs because the power leaks from the transmitter to the receiver at the FD transceiver and hence causes the presence of self-interference.
Furthermore, it is also challenging to mitigate interference at the high-priority military receivers caused by the data transmission of the commercial 5G network. 
Since the military operations require the advanced, secure and reliable communication network that can support the comprehensive Quality of Service implementation and achieve the graceful degradation.
To address these critical issues, we develop the dynamic interference suppression mechanisms for radar and communication cohabitation design.
In particular, we propose the joint dynamic power allocation and multi-relay selection for FDUCRNs under the self-interference consideration, where low-priority relays employ the amplify-and-forward (AF) protocol.
This work is different from our previous work \cite{Tan17a}, where we consider the more complicated case of multi-relay selection and control the powers allocated to the low-priority users.
We assume that transmit phase information and channel state information are available in the coherent scenario, while only channel state information is available in the non-coherent case.

The contributions of this paper can be summarized as follows.
1) We firstly model the cohabitation of radar and 5G communication network. 
Then, we investigate the non-coherent scenario and formulate the problem of joint power control and relay selection as well as analyze the system rate of low-priority 5G network. 
In fact, we optimize the transmission rate of the low-priority 5G network with the optimal configuration of parameters, including of the transmit powers at the low-priority source and low-priority relays, while considering the power constraints at the low-priority source and low-priority relay as well as the interference constraint at the radar receiver. 
We propose the algorithms including single and multi-update algorithms to solve the optimization problem. 
2) In the coherent scenario, we control the transmit power and perform phase adjustment at FD relays, where a phase of a transmitted signal is adjusted at each relay so that we can suppress the interference at the high-priority receiver. 
Moreover, we develop the alternative optimization mechanism to solve the non-convex problem of power allocation, phase regulation and relay selection. 
3) Extensive numerical results are presented to illustrate the impacts of different parameters on the transmission rate as well as the significant performance gain of FDUCRNs with coherent mechanism compared to FDUCRNs with non-coherent mechanism  and the traditional HDUCRNs. 
Also, the multi-update algorithm achieves rates close to the optimal solutions but has the advantage to reduce significantly time consuming compared to the single-update algorithm and exhausted algorithm.

The remaining of this paper is organized as follows. We introduce the cohabitation model of military radar and commercial 5G communication in Section~\ref{SystemModel}. Then, we propose the joint power allocation and multi-relay selection in Section~\ref{PAFRS_Problem}.
We investigate the non-coherent case in Section~\ref{PCRS_Configuration_NonCoh}, while we consider the coherent scenario in Sesion~\ref{PCRS_Configuration_Coh}.
Next, we present our numerical results in Section~\ref{Results}.
Finally, concluding remarks and future directions are highlighted in Section~\ref{conclusion}.

\section{Cohabitation Model of Military Radar and Commercial Communication}
\label{SystemModel}

\subsection{Scenario Consideration of Radar and 5G Cohabitation}
\begin{figure}[!t]
\centering
\includegraphics[width=75mm]{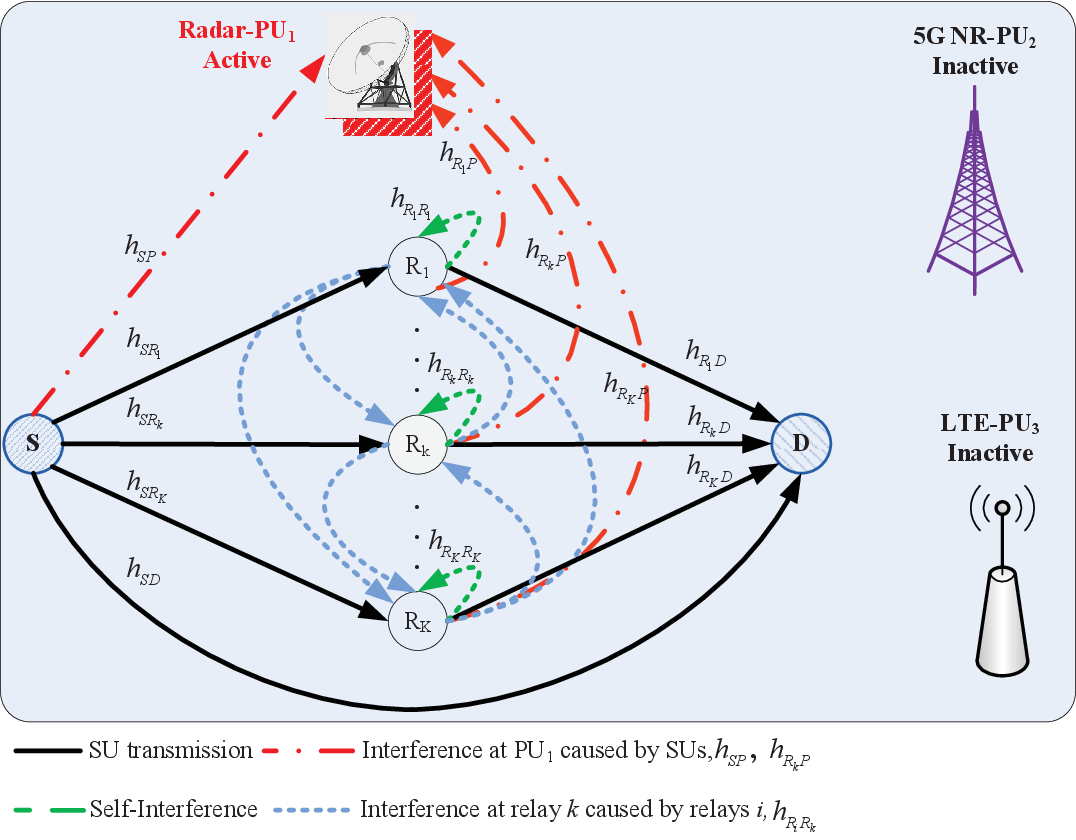} 
\caption{System model of the cognitive full-duplex relay network with three kinds of PUs: Radar, 5G NR and LTE.}
\label{MCRN}
\vspace{-0.5cm}
\end{figure}

We consider the scenario of coexistence of military and commercial wireless spectrum, where we utilize the sensing outcomes for resource allocation (see Fig.~\ref{MCRN} for scenario considered). 
Here, the low-priority network will adaptively allocate the transmit power to its users/UEs based on the outcome of the spectrum sensing.
For example, we can allocate the maximum transmit power to UEs if there is no signal observed on the spectrum band (i.e. noise only).
If there is a radar signal appearing on the frequency band, we will perform the regulation of coherent phases, power allocation and multi-relay selection for the 5G commercial networks.
We will perform the signal processing at the high-priority receiver by considering the coherent scenario, see Fig.~\ref{SRkCRSN}.
In particular, power allocation and multi-relay selection aim to maximize the 5G commercial network performance, whilst the regulation of coherent phases would perform interference suppression for the radar system, i.e. avoiding the interruption of radar communication.
We jointly optimize the set of selected relays and the transmit powers for both the low-priority source and low-priority relays, under the consideration of interference constraint at the high-priority receiver.
In the subsequent sections, we will study the case that both radar and 5G operating on the same channel and will propose the joint power allocation and multi-relay selection for optimizing the system performance.

\subsection{System Model}

In Fig.~\ref{MCRN}, we consider the military radar network with one PU receiver $P$ and a 5G commercial cognitive network with one SU source $S$, $K$ SU relays $R_k$ (where $k = 1, \ldots, K$), and one SU destination $D$.
The SU relays have a capability to operate in the FD mode because they are equipped with FD transceivers.
However, we need to address the self-interference at the SU relays, which is caused by power leakage from the transmitter to the receiver at the transceiver.
For simplicity, we assume that all the other commercial 5G/LTE networks are inactive.
By doing so, we aim to study how to design the cohabitation between the FDUCRN and primary radar network.

We employ the the AF protocol for each SU relay $R_k$, where it amplifies its received signal from the SU source $S$ with a gain of $G_k$ and then forwards this amplified signal to the SU destination, $D$.
Note that this work is different from our previous work \cite{Tan17a}, where multiple relays are selected to help the SU transmission in this work. 
Hence, the performance of each SU relay is affected by not only its self-interference but also the interference from the other SU relays' transmission.

Let $h_{SR_k}$, $h_{R_kD}$, $h_{SD}$, $h_{R_kP}$, $h_{R_jR_k}$ and $h_{R_kR_k}$ denote the channel coefficients of the links $S \rightarrow R_k$, $R_k \rightarrow D$, $S \rightarrow D$, $R_k \rightarrow P$, $R_j \rightarrow R_k$ and $R_k \rightarrow R_k$, respectively.
We denote $h_{SR_k}$, $h_{R_kD}$, $h_{SD}$, $h_{R_kP}$, $h_{R_jR_k}$ and $h_{R_kR_k}$ as the channel coefficients of the links $S \rightarrow R_k$, $R_k \rightarrow D$, $S \rightarrow D$, $R_k \rightarrow P$, $R_j \rightarrow R_k$ and $R_k \rightarrow R_k$, respectively.
We denote $P_S$ by the transmit power of SU source $S$.
We also denote by $x_S(t)$, $y_{R_k}(t)$ and $y_D(t)$ the SU source's signal, the SU relay's transmitted signals and the SU destination's received signals, respectively.

Now, we focus on one specific low-priority relay $R_k$ and perform the signal processing and analysis, as shown in Fig.~\ref{SRkCRSN}.
Here, $y_D(t)$ and $y_1^k(t)$ are defined as the received signals of the SU relay $R_k$ and SU destination $D$, while $y_2^k(t)$ is denoted by the received signal of the SU relay $R_k$ after the amplification.
Let $\mathcal{S}$ be the set of relays, $\mathcal{S} = \left\{1, \ldots, K\right\}$.
We furthermore denote $z_{R_k}(t)$ and $z_D (t)$ by the additive white Gaussian noises (AWGN) with zero mean and variances of $\sigma_{R_k}^2$ and  $\sigma_D^2$, respectively.
We ignore the direct transmission from the SU source to the SU destination due to the large attenuation of this direct transmission link (see \cite{Tan17a} and references therein).
So, the received signals of the SU relay $R_k$ and the SU destination $D$ at time $t$ are expressed as
\beqn
y_1^k(t) \!=\! h_{SR_k} \sqrt{P_S} x_S(t) \!+\! h_{R_kR_k} \left(y_2^k(t) \!+ \! \Delta y^k(t)\right) \nonumber \\
+ \sum_{j \in \mathcal{S} \backslash k} h_{R_jR_k} \left(y_2^j (t) + \Delta y^j(t)\right)  \!+\! z_{R_k}(t), \label{EQN_receiveRD1}\\
y_D(t) \!=\! h_{R_kD} y_{R_k}(t) + h_{SD} \sqrt{P_S} x_S(t) + z_D (t). \label{EQN_receiveRD2} \hspace{0.2cm}
\eeqn

Now, we need to determine the transmitted signals at SU relay $R_k$, $y_{R_k}(t) = y_2^k(t) + \Delta y^k(t).$  
Here, $\Delta y^k(t)$ is the noise and follows the i.i.d. Gaussian distribution with zero mean and variance of $P_{\Delta^k} = \zeta_k P_{R_k}$ \cite{Tan17a}.
Moreover, $y_2^k(t)$ is defined as $y_2^k(t) = f\left(\hat{y}_1^k\right) = G_k \hat{y}_1^k (t-\Delta^k).$ 
Firstly, the SU relay amplifies the signal with the gain, $G_k = \left[P_S \left|h_{SR_k}\right|^2 + \sum_{j \in \mathcal{S}} P_{\Delta^k} \left|h_{R_jR_k}\right|^2 + \sigma_{R_k}^2\right]^{-1/2}$, where $\left|a\right|$ is the amplitude of $a$.
Then, the SU relay delays $\Delta^k$ before transmitting the signal, which is fixed in the noncoherent case and is optimized to minimize the interference at the radar in the coherence case.
After performing interference cancellation, the received signal $\hat{y}_1^k(t)$ can be calculated as  
\beqn
\hat{y}_1^k(t) \!= \sqrt{P_{R_k}} \left(y_1^k(t) - \sum_{j \in \mathcal{S}} h_{R_jR_k} y_2^j(t)\right) \hspace{1.7cm} \nonumber\\
= \sqrt{P_{R_k}} \! \left[\! h_{SR_k} \! \sqrt{P_S} x_S(t) \!+\! \sum_{j \in \mathcal{S}} h_{R_jR_k} \Delta y^j(t) \!+\! z_{R_k}(t) \!\right].\!\!\!\!\!
\eeqn
Finally, the transmitted signals at SU relay $R_k$, $y_{R_k}(t)$ can be expressed as
\beqn
y_{R_k}(t) = G_k h_{SR_k} \sqrt{P_{R_k}} \sqrt{P_S} x_S(t-\Delta^k) + \Delta y^k(t) \hspace{0.7cm} \nonumber\\
\!+ G_k \!\sqrt{P_{R_k}} \sum_{j \in \mathcal{S}} h_{R_jR_k} \Delta y^j(t\!-\!\Delta^k) \!+\! G_k \!\sqrt{P_{R_k}} z_{R_k}(t\!-\!\Delta^k). \label{EQN_y_Rk_trans}
\eeqn
In the above manipulation, we assume that the channels $\left\{h_{R_jR_k}\right\}$ are perfectly estimated.
We utilize the known signal $\left\{y_2^k(t)\right\}$ at SU relay $R_k$ for performing the interference cancellation, however, we still have the residual interference $h_{R_jR_k} \Delta y^j(t)$ in $y_{R_k}(t)$.

\begin{figure}[!t]
\centering
\includegraphics[width=85mm]{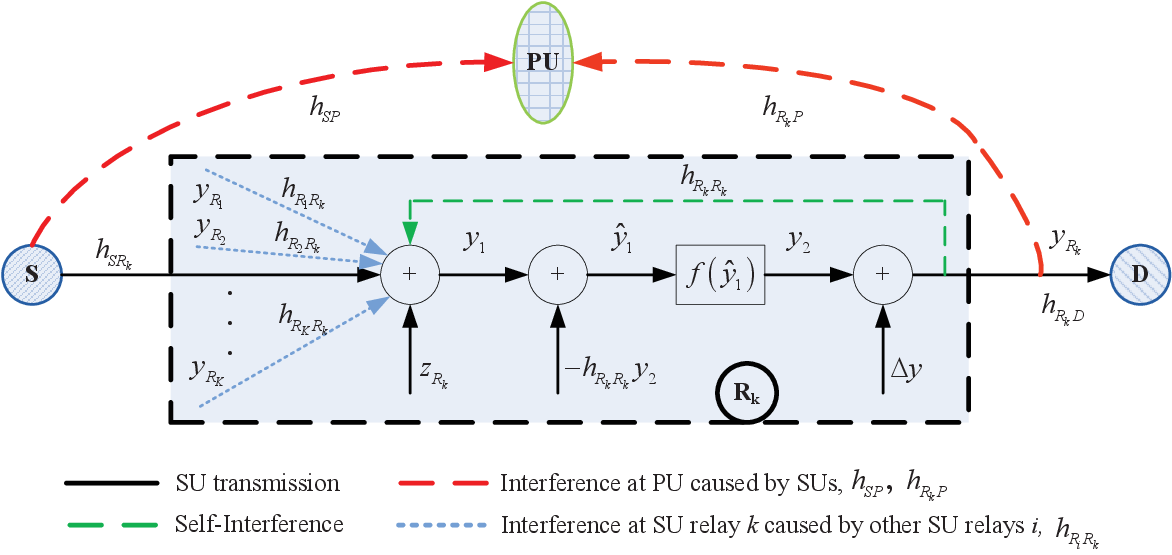}
\caption{The process at FD relay $k$.}
\label{SRkCRSN}
\vspace{-0.5cm}
\end{figure}

\section{Joint Power Allocation and Multi-Relay Selection}
\label{PAFRS_Problem}

In this section, we formulate the system rate maximization for the 5G commercial network, while avoiding the interference to the military radar system.

\subsection{Problem Formulation}
\label{RateOpt}

We denote $\mathcal{C}(P_S, \left\{P_{R_k}\right\})$ by the 5G system rate, which is the function of transmit powers of SU source $S$ and of SU relays $\left\{R_k\right\}$.
In order to protect the PU, the interference caused by the data transmission of SU source and relays, $\mathcal{I}_k$ is required to be at most the threshold of $\overline{\mathcal{I}}_{P}$.
We denote $P_S^{\sf max}$ and $P_{R_k}^{\sf max}$ by the maximum powers of the SU source and SU relay, respectively.
We also denote $\mathcal{S}$ by the group of relays.
Now, the problem of system rate maximization for the FDUCRN can be formulated as follows:
%
\begin{equation}
\label{EQN_OPTRS}
\begin{array}{l}
 {\mathop {\max }\limits_{P_S, \left\{P_{R_k}\right\}}} \quad \mathcal{C}(P_S, \left\{P_{R_k}\right\})  \\ 
 \mbox{s.t.}\,\,\,\, \mathcal{I}\left(P_S, \left\{P_{R_k}\right\}\right) \leq \mathcal{\overline I}_{P}, \\
 \quad \quad 0 \leq P_S \leq P_S^{\sf max},\\
 \quad \quad 0 \leq P_{R_k} \leq P_{R_k}^{\sf max}, k \in \mathcal{S}\\
 \end{array}\!\!
\end{equation}
The first constraint is the requirement of interference at the radar system caused by the 5G transmission $\mathcal{I}\left(P_S, \left\{P_{R_k}\right\}\right)$ is limited.
This constraint would ensure the co-existence of military radar network (PU network) and commercial communication network (SU network).
The second and third constraints on $P_S$ and $\left\{P_{R_k}\right\}$ are for the range of transmit powers of the source and relays. 
Moreover, we need to regulate the SU relays' transmit powers $\left\{P_{R_k}\right\}$ such that the 5G network can balance the trade-off between the higher system rate and better self-interference mitigation.
In the following, we present the derivation of the 5G system rate, $\mathcal{C}(P_S, \left\{P_{R_k}\right\})$ and the interference imposed by SU transmissions, $\mathcal{I}\left(P_S, \left\{P_{R_k}\right\}\right)$.

\subsection{Achievable Rate}
\label{Rate_Formu}

From (\ref{EQN_receiveRD1}) and (\ref{EQN_receiveRD2}), we can derive the 5G network rate for the transmission link $S \rightarrow \left\{R_k\right\} \rightarrow D$ as
\beqn
\mathcal{C} = \log_2 \left[1+ \sum_{k \in \mathcal{S}} \frac{\frac{P_{R_k} \left|h_{R_kD}\right|^2}{\sigma_D^2} \frac{P_S \left|h_{SR_k}\right|^2}{\sum_{j \in \mathcal{S}} \hat{\zeta}_j P_{R_j}+\sigma_{R_k}^2}}  {\mathcal{A}}\right],
\eeqn
where
\beqn
\mathcal{A} &=& 1+ \frac{P_{R_k} \left|h_{R_kD}\right|^2}{\sigma_D^2} + \frac{P_S \left|h_{SR_k}\right|^2}{\sum_{j \in \mathcal{S}} \hat{\zeta}_j P_{R_j}+\sigma_{R_k}^2} \\
\hat{\zeta}_j &=& \left|h_{R_jR_k}\right|^2 \zeta_j. \label{EQN_zeta_hat}
\eeqn
Recall that $\left|a\right|$ is the amplitude of $a$ and we assume the  signal at the destination received directly from the source is negligible.

\subsection{Interference Imposed at PU}
\label{Interference_Formu}

We can evaluate the interference at the PU caused by the 5G network.
This includes the signals received from both the SU source $S$ and the selected relays $\left\{R_k\right\},$ given as 
\beqn
\label{EQN_Inter}
y_I^{\sf PU} (t) \!=\! h_{SP} \sqrt{P_S} x_S(t) \!+\! \sum_{k \in \mathcal{S}} h_{R_kP} y_{R_k}(t)\! +\! z_P(t),
\eeqn
where $y_{R_k}(t)$ is from (\ref{EQN_y_Rk_trans}) and $z_P(t)$ is AWGN with zero mean and variance $\sigma_P^2$.

In the following, we perform interference analysis for both non-coherent and coherent cases. 
In particular, we only consider two scenarios of non-coherent and coherent transmissions for both the links of SU source$\rightarrow$PU receiver and SU relays$\rightarrow$PU receiver, while all other links are set as non-coherent transmissions.  
Furthermore, the phase information in the coherent mechanism is assumed to be known. 
Note that we can utilize implicit feedback and explicit feedback or the channel estimation to obtain this information \cite{Arslan07}.

\subsubsection{Non-coherent Scenario}

We can utilize (\ref{EQN_Inter}) and (\ref{EQN_y_Rk_trans}) to derive the interference received at the PU, given as
\beqn
\mathcal{I}^{\sf non}\left(P_S, \left\{P_{R_k}\right\}\right) = \left|h_{SP}\right|^2 P_S + \sum_{k \in \mathcal{S}} \left|h_{R_kP}\right|^2 \zeta_k P_{R_k} + \nonumber \\
\sum_{k \in \mathcal{S}} \!\!G_k^2 \left|h_{R_kP}\right|^2 \! P_{R_k} \!\!\!\left[\! \left|\!h_{SR_k}\right|^2 \! P_S \!+\! \sum_{j \in \mathcal{S}} \!\!\left|h_{R_jR_k}\right|^2 \! \zeta_j P_{R_j} \!+\! \sigma_{R_k}^2\!\right]\!\!\!
\eeqn
Using some manipulations, we have
\beqn
\mathcal{I}^{\sf non}_k\left(P_S, P_{R_k}\right) = \left|h_{SP}\right|^2 P_S \!+ \!\sum_{k \in \mathcal{S}} \left|h_{R_kP}\right|^2 P_{R_k} \left(1+\zeta_k\right)\!.\!
\eeqn

\subsubsection{Coherent Scenario}

Combining (\ref{EQN_Inter}) and (\ref{EQN_y_Rk_trans}),  we can get the received interference at the PU as
\beqn
\label{EQN_I_k_coh_ori}
\mathcal{\bar{I}}^{\sf coh}\left(P_S, \left\{P_{R_k}\right\}, \left\{\phi_k\right\}\right) = \left|A + \sum_{k \in \mathcal{S}} B_k e^{-j\phi_k}\right|^2,
\eeqn
where $A = h_{SP} \sqrt{P_S} + \sum_{k \in \mathcal{S}} h_{R_kP}\sqrt{\zeta_k P_{R_k}} = \left|A\right| \angle{\phi_A}$, 
\beqn
B_k \!= \!\left(h_{SR_k} \sqrt{P_S} + \sum_{i \in  \mathcal{S}} h_{R_iR_k} \sqrt{\zeta_i P_{R_i}} + \frac{\sigma_{R_i}}{\sqrt{2}} (1+j)\right) \nonumber\\
\times G_k h_{R_kP} \sqrt{P_{R_k}} = \left|B_k\right| \angle{\phi_{B_k}},
\eeqn
$\left|a\right|$ is the amplitude of $a$ and $\phi_k = 2 \pi f_s \Delta^k$, $f_s$ is the sampling frequency.

We further define the following quantities to derive $\mathcal{\bar{I}}^{\sf coh}\left(P_S, \left\{P_{R_k}\right\}, \left\{\phi_k\right\}\right)$.
Let us define $\tilde{B}_k = B_k e^{-j\phi_k} = \left|B_k\right| \angle \phi_{\tilde{B}_k}$ where $\left|\tilde{B}_k\right| = \left|B_k\right|$ and $\phi_{\tilde{B}_k} = \phi_{B_k} - \phi_k$ for $k \in \mathcal{S}$.
We also define $\tilde{B}_0 = \left|\tilde{B}_0\right| \angle \phi_{\tilde{B}_0} = \left|A\right| \angle \phi_A$ (i.e. $\left|\tilde{B}_0\right| = \left|B_0\right| = \left|A\right|$, and $\phi_{\tilde{B}_0} = \phi_A$) and set $\mathcal{\tilde{S}} = \mathcal{S} \cup \left\{0\right\}$.
Let denote the subscripts $R$ and $I$ at $a^R$ and $a^I$ by the real and image of complex number $a$, respectively.
Then, we have 
\beqn
\left|\tilde{B}_k\right|^2 &=& \left(\tilde{B}_k^R\right)^2 + \left(\tilde{B}_k^I\right)^2, \label{EQN_B_abs}\\
\tilde{B}_k^R &=& \left|\tilde{B}_k\right| {\sf Cos}\left(\phi_{\tilde{B}_k}\right) =  \left|B_k\right| {\sf Cos}\left(\phi_{\tilde{B}_k}\right), \label{EQN_B_Real}\\
\tilde{B}_k^I &=& \left|\tilde{B}_k\right| {\sf Sin}\left(\phi_{\tilde{B}_k}\right) = \left|B_k\right| {\sf Sin}\left(\phi_{\tilde{B}_k}\right). \label{EQN_B_Image}
\eeqn

From (\ref{EQN_I_k_coh_ori}), we use some manipulations and rewrite the $\mathcal{\bar{I}}^{\sf coh}\left(P_S, \left\{P_{R_k}\right\}, \left\{\phi_k\right\}\right)$ as 
\beqn \label{I_coh_explain}
\mathcal{\bar{I}}^{\sf coh} = \left|\sum_{k \in \mathcal{\tilde{S}}} \tilde{B}_k^R + j\tilde{B}_k^I\right|^2 
=  \sum_{k \in \mathcal{\tilde{S}}} \left|\tilde{B}_k\right|^2 +\nonumber \\
2 \sum_{k \in \mathcal{\tilde{S}}} \sum_{i \in \mathcal{\tilde{S}}} \left|\tilde{B}_k\right| \left|\tilde{B}_i\right| 
 \times {\sf Cos} \left(\phi_{B_k}-\phi_{B_i}+\phi_i-\phi_k\right).
\eeqn
The detailed derivations are presented in \cite{Techreport}.

We need to perform the pre-processing $\mathcal{\bar{I}}^{\sf coh}_k\left(P_S, \left\{P_{R_k}\right\}, \left\{\phi_k\right\}\right)$ before using it in the optimization problem (\ref{EQN_OPTRS}).
Given $\left(P_S, \left\{P_{R_k}\right\}\right)$, the function $\mathcal{\bar{I}}^{\sf coh}_k\left(P_S, \left\{P_{R_k}\right\}, \left\{\phi_k\right\}\right)$ is minimized over the variable $\left\{\phi_k\right\}$.
This formulated minimization problem is presented as
\vspace{0.0cm}
\begin{equation}
\label{EQN_OPT_PHI}
 {\mathop {\min }\limits_{\left\{\phi_k\right\}} \quad \mathcal{\bar{I}}^{\sf coh}\left(P_S, \left\{P_{R_k}\right\}, \left\{\phi_k\right\}\right) }. 
\end{equation}

\begin{lemma} \label{Lemma_coherent_phase}
Problem (\ref{EQN_OPT_PHI}) is a nonconvex optimization problem for variables $\left\{\phi_k\right\}$. 
\end{lemma}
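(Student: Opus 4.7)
My plan is to exploit the cosine structure of $\mathcal{\bar I}^{\sf coh}$ exposed in (\ref{I_coh_explain}). With $(P_S,\{P_{R_k}\})$ fixed, the magnitudes $|\tilde{B}_k|=|B_k|$ and the reference phases $\phi_{B_k}$ are constants; the free variables $\{\phi_k\}$ enter only through the affine arguments $\phi_{B_k}-\phi_{B_i}+\phi_i-\phi_k$ inside cosines. Two independent routes then both prove nonconvexity, and I would present the second as the main argument.

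\textbf{Route 1 (periodicity).} The objective is periodic in each coordinate $\phi_k$ with period $2\pi$. Any function convex on $\mathbb{R}^n$ is convex along every line, and a convex function of one real variable is either constant or unbounded at $\pm\infty$. A nonconstant periodic function is bounded, a contradiction. It thus suffices to exhibit one $\phi_k$ in which the objective is nonconstant, which holds as soon as some cross product $|\tilde B_k||\tilde B_i|$ with $i\neq k$ is nonzero.

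\textbf{Route 2 (Hessian test at an aligned point).} I would compute the diagonal of the Hessian directly. Differentiating (\ref{I_coh_explain}) twice with respect to $\phi_k$ yields
\begin{equation*}
\frac{\partial^2 \mathcal{\bar I}^{\sf coh}}{\partial \phi_k^2}
= -2\sum_{i\in\tilde{\mathcal{S}},\,i\neq k} |\tilde B_k||\tilde B_i|\,
\cos\bigl(\phi_{B_k}-\phi_{B_i}+\phi_i-\phi_k\bigr).
\end{equation*}
Evaluating at the aligned configuration $\phi_i=\phi_{B_i}$ for all $i\in\tilde{\mathcal{S}}$ makes every cosine equal to $1$, so the diagonal entry reduces to $-2|\tilde B_k|\sum_{i\neq k}|\tilde B_i|<0$. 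A symmetric matrix with a strictly negative diagonal entry cannot be positive semidefinite, so the Hessian fails to be PSD at an interior point of the (convex) feasible set $\mathbb{R}^{|\mathcal{S}|}$ for $\{\phi_k\}$. Hence the objective is not convex and problem (\ref{EQN_OPT_PHI}) is nonconvex.

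The only real obstacle is the bookkeeping of the degenerate case: if some magnitude $|\tilde B_k|$ (or $|\tilde B_0|=|A|$) vanishes, the corresponding $\phi_k$ decouples from the objective and the argument above loses bite for that coordinate. I would dispatch this by invoking the standing nondegeneracy assumption that the channel gains $h_{SP}, h_{R_kP}, h_{SR_k}, h_{R_jR_k}$ and the powers $P_S, P_{R_k}$ are strictly positive, which through the definitions of $A$ and $B_k$ immediately forces $|\tilde B_k|>0$ for all $k\in\tilde{\mathcal{S}}$; with this assumption in place, the negative diagonal entry above is bona fide, completing the proof.
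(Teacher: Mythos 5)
Your Route~2 is essentially the paper's own argument: the paper also computes the diagonal Hessian entry $\mathcal{H}_{11} = \partial^2 \mathcal{\bar{I}}^{\sf coh}/\partial \phi_1^2 = -2\sum_{i}|\tilde B_k||\tilde B_i|\cos(\phi_{B_k}-\phi_{B_i}+\phi_i-\phi_k)$ and observes it is not always positive. However, your execution is tighter in a way that matters. The paper stops at ``$\mathcal{H}_{11}$ is not always larger than zero'' and concludes only that the objective is ``not strictly convex,'' invoking Sylvester's criterion (which characterizes positive \emph{definiteness}, not semidefiniteness); strictly speaking, failure of strict convexity does not rule out convexity, so the paper's chain of reasoning has a small logical gap. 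By evaluating at the aligned configuration $\phi_i=\phi_{B_i}$ you exhibit a concrete point where a diagonal entry is strictly negative, which genuinely shows the Hessian is not positive semidefinite there and hence that the objective is nonconvex --- the statement the lemma actually claims. Your nondegeneracy bookkeeping ($|\tilde B_k|>0$) is also a point the paper glosses over. Route~1 (periodicity forces any nonconstant coordinate section to be bounded, hence nonconvex) is a genuinely different and more elementary argument that avoids second derivatives entirely and would survive even at points where the Hessian computation is awkward; either route alone suffices, and both are correct.
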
 

\begin{proof} The proof is provided in Appendix~\ref{Lemma_coherent_phase_proof}. \end{proof}

Since, Problem (\ref{EQN_OPT_PHI}) is not convex according to Lemma~\ref{Lemma_coherent_phase}, it is extremely hard to determine the optimal solutions
Based on Lemma~\ref{Lemma_coherent_phase}, it is very hard to determine the optimal solutions for Problem (\ref{EQN_OPT_PHI}).
Hence, we convert the original problem to the alternative problem as follows.
In particular, we propose the in-phase and anti-phase regulation where each relay $k$ regulates the coherent phase $\phi_k$, such that the $\phi_{\tilde{B}_k} = \phi_{\tilde{B}_0} = \phi_A$ (in-phase case) or $\phi_{\tilde{B}_k} = \pi - \phi_{\tilde{B}_0} = \pi - \phi_A$ (anti-phase case).
Let us define $\mathcal{S}_{IP}$ and $\mathcal{S}_{AP}$ as the subset of relays belonging to in-phase cases and the subset of relays belonging to anti-phase cases, respectively.
From (\ref{I_coh_explain}), we can rewrite $\mathcal{\bar{I}}^{\sf coh}\left(P_S, \left\{P_{R_k}\right\}, \left\{\phi_k\right\}\right)$ as
\beqn
\mathcal{\bar{I}}^{\sf coh}\left(P_S, \left\{P_{R_k}\right\}, \left\{\phi_k\right\}\right) = \left(\sum_{k \in \mathcal{S}_{IP} \cup \left\{0\right\}} \!\!\!\!\!\left|B_k\right| \!- \!\!\!\sum_{l \in \mathcal{S}_{AP}} \!\!\left|B_l\right|\right)^2.
\eeqn
Here, we can see that the $\mathcal{\bar{I}}^{\sf coh}$ is now the function of sets $\mathcal{S}_{IP}$ and $\mathcal{S}_{AP}$, i.e. $\mathcal{\bar{I}}^{\sf coh}\left(P_S, \left\{P_{R_k}\right\}, \mathcal{S}_{IP}, \mathcal{S}_{AP}\right)$.
So, the original problem can be equivalently transformed to the alternative problem as
\begin{equation}
\label{EQN_OPT_PHI_ALT}
 {\mathop {\min }\limits_{\mathcal{S}_{IP}, \mathcal{S}_{AP}} \quad \mathcal{\bar{I}}^{\sf coh}\left(P_S, \left\{P_{R_k}\right\}, \mathcal{S}_{IP}, \mathcal{S}_{AP}\right) } 
\end{equation}

The problem (\ref{EQN_OPT_PHI_ALT}) is the two-way number partitioning problem \cite{Graham69}.
To solve this problem, we aim to divide the given set of number $\mathcal{\tilde{S}}$ into two subsets $\mathcal{S}_{IP}\cup \left\{0\right\}$ and $\mathcal{S}_{AP}$ such that the sum of numbers in $\mathcal{S}_{IP}\cup \left\{0\right\}$ is as nearly equal to that in $\mathcal{S}_{AP}$ as possible.
It implies that we will determine $\mathcal{S}_{IP}\cup \left\{0\right\}$ and $\mathcal{S}_{AP}$ such that the difference sum of numbers in these subsets, $\mathcal{D}$ is as close to zero as possible.
We develop the complete greedy algorithm to solve this problem and the summary of this algorithm and its explanation are given as follows.
Other algorithms can be found in \cite{Graham69}.

\begin{algorithm}[ht]
\caption{\textsc{Number Partitioning Algorithm}}
\label{OPT_NUM_PARTITIONING}
\begin{algorithmic}[1]

\STATE Initialize $\mathcal{S}^0_1 = \emptyset$, $\mathcal{S}^0_2 = \emptyset$.

\STATE Calculate $B_k$, $k \in \mathcal{\tilde{S}}$.

\STATE Sort $B_k$ in descending order.

\STATE Create and search a binary tree.

\STATE Stop and return output $\left(\mathcal{S}_1,\mathcal{S}_2\right)$ when the prune condition is satisfied or we reach the last leaf. 
Otherwise, return step 4.

\STATE If $0 \in \mathcal{S}_1$, then $\left(\mathcal{S}_{IP},\mathcal{S}_AP\right) = \left(\mathcal{S}_1 \backslash \left\{0\right\},\mathcal{S}_2\right)$. Otherwise, (i.e., $0 \in \mathcal{S}_2$), $\left(\mathcal{S}_{IP},\mathcal{S}_AP\right) = \left(\mathcal{S}_2 \backslash \left\{0\right\},\mathcal{S}_1\right)$.

\end{algorithmic}
\end{algorithm}
At step 4, each level is corresponding to a different number assignment.
Moreover, each branch assign that index of number to one subset or the other. 
The prune conditions are as follows:

\noindent 1) \textbf{Condition 1}: If we find the leaf $m$ (or complete partition) with $\mathcal{D}_m < \epsilon$ ($\epsilon$ is predetermined small number) then stop running.

\noindent 2) \textbf{Condition 2:} At branch point $m$, if $\mathcal{D}_m - \sum_{k \in \mathcal{S} \backslash \mathcal{S}_1^m \backslash \mathcal{S}_2^m} B_k < \epsilon$ ($\mathcal{S}_1^m$ and $\mathcal{S}_2^m$ are the index sets of all $B_k$ that are already assigned at branch point $m$), then stop and assign all the remaining $k$ ($k \in \mathcal{\tilde{S}} \backslash \mathcal{S}_1^m \backslash \mathcal{S}_2^m$) to the subset with lower sum.

\noindent 3) \textbf{Condition 3:} At branch point $m$, if $\mathcal{D}_m = 0$, the next $k$ (index of number $B_k$) can be assigned to one of $\mathcal{S}_1^m$ and $\mathcal{S}_2^m$.

We furthermore clarify step 4 in Alg.~\ref{OPT_NUM_PARTITIONING} to make it clear.
Example for the tree creation is in Fig.~\ref{NUM_PART}, where we consider the set of $\mathcal{\tilde{S}} = \left\{B_k\right\} = {4,5,6,7,8}$.
For simplicity, we use here the set for these numbers $B_k$ instead of the indexes $k$.
We use $\mathcal{D}_m\left|\mathcal{\tilde{S}}_m\right.$, where $\mathcal{D}_m$ is the difference of between the sum of set $\mathcal{S}_1^m$ and the sum of set $\mathcal{S}_2^m$ at branch $m$, and $\mathcal{\tilde{S}}_m = \mathcal{\tilde{S}} \backslash \mathcal{S}_1^m \backslash \mathcal{S}_2^m$ is the remaining set of unassigned numbers at branch $m$.
For example, at branch 2, we have $\mathcal{D}_m = 8-7=1$ and the remaining set is $\mathcal{\tilde{S}}_m =\left\{6,5,4\right\}$ after assigning 8 to set 1 $\mathcal{S}_1^2$ and 7 to set 2 $\mathcal{S}_2^2$.
Note that we always assign the last number $B_k$ to the subset with lower sum to reduce the number of leaves (i.e. reduce time consuming).
We can observe that case at branches 6, 7, 8 and 9.
For branch 3, we can see that condition 2 is satisfied, so it stops and gives the output.

\begin{figure}[!t]
\centering
\includegraphics[width=80mm]{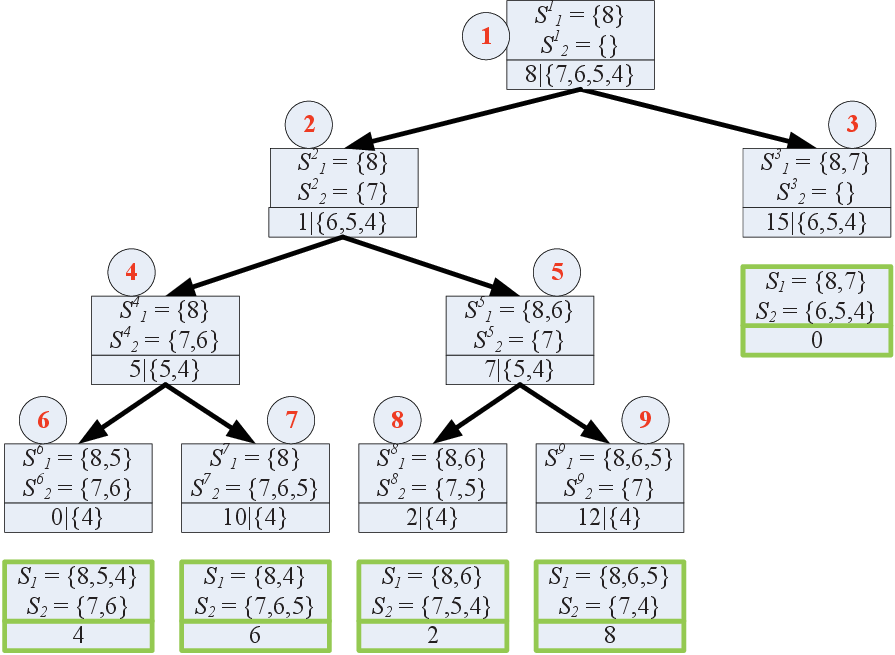}
\caption{The example of tree creation and searching.}
\label{NUM_PART}
\end{figure}

\subsection{Multi-relay Selection}

We generate the combination sets of relays and then perform the resource allocation based on the specific set of relays $\mathcal{S}$.
We aim to get the optimal set $\mathcal{S}^{max}$, which maximizes the achievable rate of FDUCRN. 
This optimal set $\mathcal{S}^{max}$ can be determined by using the assignment mechanisms \cite{Tan19, Tan18}.
Note that for each selected $\mathcal{S}$, we need to perform the power allocation as well as optimize the phase regulation in  the coherent scenario.
It means that we develop the iterative algorithm to obtain the optimal set $\mathcal{S}^{max}$.
In the following, we will derive the algorithms for determining the optimal power allocation (in both coherent and non-coherent cases) and phase regulation (in the coherent case).

\section{Power Allocation in the Non-coherent Scenario}
\label{PCRS_Configuration_NonCoh}

This object function of achievable rate $C$ has the formulation $\log_2(1+x)$, which is strictly increasing with respect to the variable $x$. 
It means that we can convert the original problem (\ref{EQN_OPTRS}) to the new optimization problem, which optimizes the new object $x$ instead of $\log_2(1+x)$.
Also, we assume that $\sum_{j \in \mathcal{S}} \hat{\zeta}_j P_{R_j} >> \sigma_{R_k}^2$, i.e. the self-interference power at the 5G SU relay is much larger than the noise power.
So, the term $\sigma_{R_k}^2$ is omitted in the object function.
Hence, the problem (\ref{EQN_OPTRS}) can be rewritten as
%
\vspace{0.0cm}
\begin{equation}
\label{EQN_OPTRS_1}
\begin{array}{l}
 {\mathop {\max }\limits_{P_S, \left\{P_{R_k}\right\}}} \quad \mathcal{\bar C}(P_S, \left\{P_{R_k}\right\})  \\ 
 \mbox{s.t.}\,\,\,\, \mathcal{I}^{\sf non}\left(P_S, \left\{P_{R_k}\right\}\right) \leq \mathcal{\overline I}_{P}, \\
 \quad \quad 0 \leq P_S \leq P_S^{\sf max}, \\
 \quad \quad 0 \leq P_{R_k} \leq P_{R_k}^{\sf max}, k \in \mathcal{S}, \\
 \end{array}\!\!
\end{equation}
where $\hat{\zeta}_j$ is from (\ref{EQN_zeta_hat}) and $\mathcal{\bar C}(P_S, \left\{P_{R_k}\right\}) = 1/\mathcal{\bar A}\sum_{k \in \mathcal{S}}\frac{P_{R_k} \left|h_{R_kD}\right|^2}{\sigma_D^2} \frac{P_S \left|h_{SR_k}\right|^2}{\sum_{j \in \mathcal{S}} \hat{\zeta}_j P_{R_j}}$, $\mathcal{\bar A} = 1+ \frac{P_{R_k} \left|h_{R_kD}\right|^2}{\sigma_D^2} + \frac{P_S \left|h_{SR_k}\right|^2}{\sum_{j \in \mathcal{S}} \hat{\zeta}_j P_{R_j}}$.

To solve the problem (\ref{EQN_OPTRS_1}), we need to characterize the optimization problem by introducing Lemmas~\ref{Lemma_noncoh1} and \ref{lem: nonco-convex} as follows.
\begin{lemma} \label{Lemma_noncoh1}
Problem (\ref{EQN_OPTRS_1}) is not a convex optimization problem for variables $\left(P_S, \left\{P_{R_k}\right\}\right)$.
\end{lemma}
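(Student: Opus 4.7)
The plan is to show non-convexity by certifying that the objective $\mathcal{\bar C}$ fails to be concave on the feasible set; the constraints themselves are already convex, because in the non-coherent case the interference function
\[
\mathcal{I}^{\sf non}(P_S,\{P_{R_k}\}) \;=\; |h_{SP}|^2 P_S + \sum_{k \in \mathcal{S}} |h_{R_kP}|^2 (1+\zeta_k) P_{R_k}
\]
is affine in the power variables and the box constraints are trivially convex. Hence the non-convexity of Problem (\ref{EQN_OPTRS_1}), which is a maximization, must come entirely from the non-concavity of $\mathcal{\bar C}$.

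I would localize the argument to the simplest nontrivial special case of a single relay, $|\mathcal{S}|=1$; exhibiting a non-convex specialization of a parametric family of problems already certifies non-convexity of the full family, so this loses no generality. Writing $x = P_S$, $y = P_{R_1}$, and absorbing the positive channel and noise constants into $\alpha,\beta,\gamma > 0$, the objective collapses to
\[
\mathcal{\bar C}(x,y) \;=\; \frac{\alpha\, x}{\,1 + \beta y + \gamma x/y\,}.
\]
For fixed $x > 0$, this function of $y$ vanishes as $y \to 0^+$ through the self-interference term $\gamma x/y$ and as $y \to \infty$ through $\beta y$, while remaining strictly positive in between; it is concave near its unique maximizer $y^* = \sqrt{\gamma x/\beta}$ but becomes convex for large $y$, since a direct computation of $\partial^2 \mathcal{\bar C}/\partial y^2$ has positive leading behavior of order $2\alpha x/(\beta y^3)$ as $y\to\infty$. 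Thus the second partial derivative changes sign on the interior of the feasible box whenever $P_{R_1}^{\sf max}$ extends past the inflection point.

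To convert this into a formal certificate, I would pick a concrete numerical instance, for example $\alpha=\beta=\gamma=1$ with $P_S^{\sf max},P_{R_1}^{\sf max}$ taken large enough, and display two feasible profiles $(x_1,y_1)=(1,1)$ and $(x_2,y_2)=(1,10)$ along with their midpoint $(1,5.5)$, then verify by direct substitution that
\[
\mathcal{\bar C}(1,5.5) \;<\; \tfrac{1}{2}\mathcal{\bar C}(1,1) + \tfrac{1}{2}\mathcal{\bar C}(1,10),
\]
which is a direct violation of the concavity inequality. Equivalently, one could evaluate the Hessian of $\mathcal{\bar C}$ at an interior point chosen to the right of the bump and read off a strictly positive eigenvalue.

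The main obstacle is computational rather than conceptual: keeping the Hessian algebra or the midpoint arithmetic clean enough to produce an unambiguous sign without drowning in the coupling between $x$ and $y$ through the $\gamma x/y$ term. Restricting throughout to $|\mathcal{S}|=1$ and to a single numerical reference point is what makes the calculation tractable. Once non-concavity of $\mathcal{\bar C}$ at an interior feasible point is established, Lemma~\ref{Lemma_noncoh1} follows at once, since maximizing a non-concave objective over a convex feasible set does not satisfy the definition of a convex optimization problem.
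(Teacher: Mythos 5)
Your proposal is correct in substance, but note that the paper does not actually print a proof of Lemma~\ref{Lemma_noncoh1}: it defers it to Appendix~B of the external technical report. Judging from the in-paper proof of the companion Lemma~\ref{Lemma_coherent_phase}, the authors' template is to write down the Hessian and invoke Sylvester's criterion to show a diagonal entry (hence definiteness) can fail; your route is genuinely different and more elementary. You first observe that $\mathcal{I}^{\sf non}$ is affine and the box constraints are convex, so any non-convexity must sit in the objective; you then collapse to $|\mathcal{S}|=1$, where $\mathcal{\bar C}(x,y)=\alpha x/(1+\beta y+\gamma x/y)$, and exhibit an explicit violation of the midpoint concavity inequality (your numbers check out: with $\alpha=\beta=\gamma=x=1$ one gets $\mathcal{\bar C}(1,5.5)\approx 0.1497 < 0.2117 \approx \tfrac12\mathcal{\bar C}(1,1)+\tfrac12\mathcal{\bar C}(1,10)$, and the large-$y$ asymptote $f''\sim 2\alpha x/(\beta y^3)>0$ is also right). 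A concrete three-point certificate of this kind is self-contained and checkable, which the Hessian-sign argument is not without further bookkeeping; what the Hessian route buys instead is uniformity over all parameter values and all $K$ without picking a numerical instance.

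Two small gaps to close. First, feasibility of your test points requires not only $P_S^{\sf max}$ and $P_{R_1}^{\sf max}$ large enough but also $\mathcal{\overline I}_{P}$ large enough that the (affine) interference constraint is inactive at $(1,1)$, $(1,5.5)$, $(1,10)$; say so explicitly, or simply note that the three points lie on a segment inside the box and the interference constraint can only shrink the feasible set, which for suitable $\mathcal{\overline I}_{P}$ still contains that segment. Second, "a non-convex specialization certifies non-convexity of the full family" is fine if the lemma is read as a statement about the parametric family, but if it is meant for every $K$ you should embed the counterexample: fix all but one relay's power and repeat the computation, or choose channel gains making the other relays' contributions negligible. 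Finally, be aware that your calculation shows $\mathcal{\bar C}$ fails to be concave already along the single coordinate direction $P_{R_1}$, which sits in tension with the paper's Lemma~\ref{lem: nonco-convex} unless that lemma implicitly assumes the box $\left[0,P_{R_k}^{\sf max}\right]$ stays to the left of the inflection point $y^*$; this is worth flagging rather than hiding, since it affects the validity of the alternating-update algorithm built on Lemma~\ref{lem: nonco-convex}.
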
 

\begin{proof} The proof is provided in Appendix~B in \cite{Techreport}. \end{proof}

\begin{lemma} 
Given $P_S \in \left[0, P_S^{\sf max}\right]$ and $P_{R_j} \in \left[0, P_{R_j}^{\sf max}\right]$ for $j \in \mathcal{S} \backslash k$, problem (\ref{EQN_OPTRS_1}) is a convex optimization problem in terms of $P_{R_k}$. Similarly, given $P_{R_k} \in \left[0, P_{R_k}^{\sf max}\right]$ for $k \in \mathcal{S}$, problem (\ref{EQN_OPTRS_1}) is also a convex optimization problem in terms of $P_S$. \label{lem: nonco-convex}
\end{lemma}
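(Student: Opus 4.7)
The plan is to reduce the claim to two univariate concavity arguments. The constraints of problem (\ref{EQN_OPTRS_1}) are the linear interference bound $\mathcal{I}^{\sf non}\le\overline{\mathcal I}_P$ together with box constraints on the powers; fixing all but one decision variable collapses the feasible region to an interval, which is trivially convex. It therefore suffices to show that $\bar{\mathcal C}$ is concave in whichever single variable we vary.

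I would decompose the objective as $\bar{\mathcal C}=\sum_{k\in\mathcal S}T_k$, where
\[
T_k=\frac{a_k b_k}{1+a_k+b_k},\qquad a_k=\frac{P_{R_k}|h_{R_kD}|^2}{\sigma_D^2},\qquad b_k=\frac{P_S|h_{SR_k}|^2}{\sum_{j\in\mathcal S}\hat\zeta_j P_{R_j}},
\]
and exploit the identity $ab/(1+a+b)=a-a(1+a)/(1+a+b)$, which shows that for any fixed $a>0$ the scalar map $b\mapsto ab/(1+a+b)$ is strictly concave on $[0,\infty)$ with second derivative $-2a(1+a)/(1+a+b)^3$. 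For the $P_S$ claim, with all $P_{R_k}$ fixed, $a_k$ is constant and $b_k$ is affine in $P_S$, so each $T_k$ is the composition of a concave function with an affine map, hence concave in $P_S$; summation preserves concavity.

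For the $P_{R_k}$ claim with $P_S$ and $\{P_{R_j}\}_{j\neq k}$ fixed, I would verify concavity of each summand by direct differentiation. For $j\neq k$, $a_j$ is constant and $b_j=c_j/(\hat\zeta_k P_{R_k}+d_j)$, so $T_j$ becomes a rational function of $P_{R_k}$ whose second derivative can be signed by inspection. For the coupled index $j=k$, clearing denominators yields $T_k = N\,P_{R_k}/(A\,P_{R_k}^2+B\,P_{R_k}+C)$ with strictly positive coefficients $N,A,B,C$ built from the fixed data, after which the second derivative can be computed explicitly.

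The main obstacle will be this coupled term $T_k$: $P_{R_k}$ enters both the numerator (through $a_k$) and the denominator (through self-interference in $b_k$), which breaks the clean affine-composition argument that worked for $P_S$. Signing $T_k''$ reduces to a cubic inequality in $P_{R_k}$; the interference cap and the bound $P_{R_k}\le P_{R_k}^{\sf max}$ keep the argument within the regime where this cubic is non-negative, avoiding the asymptotic $1/P_{R_k}$ tail in which $T_k$ would eventually become convex. Once concavity of each summand is established on the feasible interval, $\bar{\mathcal C}$ is concave as a finite sum, the feasible set is a convex interval, and the restricted problem is a convex optimization problem.
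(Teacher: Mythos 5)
The paper itself gives no in-text proof of this lemma (it only cites Appendix~D of an external technical report), so your proposal has to stand on its own. The $P_S$ half does: with $\{P_{R_j}\}$ fixed, each $a_k$ is constant, each $b_k$ is affine in $P_S$, the map $b\mapsto ab/(1+a+b)$ is concave for fixed $a>0$, and the restricted feasible set is an interval, so concavity in $P_S$ follows as you say.

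The $P_{R_k}$ half has a genuine gap, and it is in the cross terms, not only in the coupled term you flag. For $j\neq k$, substituting $b_j=c_j/(\hat\zeta_k P_{R_k}+d_j)$ into $T_j=a_jb_j/(1+a_j+b_j)$ and using your own identity gives
\[
T_j=\frac{a_jc_j}{(1+a_j)\bigl(\hat\zeta_k P_{R_k}+d_j\bigr)+c_j},
\]
a positive constant over an affine, increasing function of $P_{R_k}$; its second derivative is $2a_jc_j(1+a_j)^2\hat\zeta_k^2/\bigl((1+a_j)(\hat\zeta_kP_{R_k}+d_j)+c_j\bigr)^3>0$, so each such $T_j$ is strictly \emph{convex} in $P_{R_k}$. "Signing by inspection" therefore returns the wrong sign for your strategy: the sum $\sum_{j\neq k}T_j$ is strictly convex, and concavity of $\bar{\mathcal C}$ in $P_{R_k}$ would require the single coupled term $T_k$ to dominate all of them, which is a quantitative condition on the channel gains rather than a structural fact (as $|h_{R_kD}|\to 0$, $T_k$ vanishes and the restricted objective becomes strictly convex). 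Your treatment of $T_k=NP_{R_k}/(AP_{R_k}^2+BP_{R_k}+C)$ is also incomplete on its own terms: $T_k''\le 0$ is equivalent to $A^2P_{R_k}^3\le 3ACP_{R_k}+BC$, which holds near zero and fails for large $P_{R_k}$, and you offer no reason why $P_{R_k}^{\sf max}$, or the interference cap (which bounds a linear combination of all powers rather than shrinking this particular interval), keeps the feasible interval below the inflection point. As written, the proposal proves convexity of the restricted feasible set and the $P_S$ claim, but not the $P_{R_k}$ claim; that part needs either additional hypotheses or a different argument.
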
 

\begin{proof} The proof is provided in Appendix~D in \cite{Techreport}. \end{proof}

To address the non-convex optimization problem, we now derive the alternative optimization problem to determine the solution for problem (\ref{EQN_OPTRS_1}) according to Lemma \ref{lem: nonco-convex}.  
In fact, when we optimize one variable and fix all the others, the problem turns to be convex.
It means that we obtain the convex optimization problem in each consideration and hence utilize the standard approaches \cite{Boyd04} to solve it.
We also propose the multi-update algorithm to solve the problem, where we can update multiple transmit powers of relays, $P_{R_k}$ at each step.
The simulation results in Section~\ref{Results} show that our new proposal achieves good rate performances, which are close to the optimal solution as well as reduces the CPU time.

\section{Power Allocation and Phase Regulation in the Coherent Scenario}
\label{PCRS_Configuration_Coh}

Assume that $\sum_{j \in \mathcal{S}}\hat{\zeta}_j P_{R_j} >> \sigma_{R_k}^2$, i.e. the noise power is negligible when compared to the  power of the self-interference at the selected relay $R_k$.
Hence, problem (\ref{EQN_OPTRS}) can be converted to the new formulation as
\begin{equation}
\label{EQN_OPTRS_2}
\begin{array}{l}
 {\mathop {\max }\limits_{P_S, \left\{P_{R_k}\right\}}} \quad \mathcal{\bar C}^{\sf coh}(P_S, \left\{P_{R_k}\right\})  \\ 
 \mbox{s.t.}\,\,\,\, \mathcal{\bar{I}}^{\sf coh}\left(P_S, \left\{P_{R_k}\right\}\right) \leq \mathcal{\overline I}_{P}, \\
 \quad \quad 0 \leq P_S \leq P_S^{\sf max}, \\
 \quad \quad 0 \leq P_{R_k} \leq P_{R_k}^{\sf max}, k \in \mathcal{S},\\
 \end{array}\!\!
\end{equation}  
where $\hat{\zeta}_j$ is calculated in (\ref{EQN_zeta_hat}) and
\beqn
\mathcal{\bar C}^{\sf coh}(P_S, \left\{P_{R_k}\right\}) = \sum_{k \in \mathcal{S}} \frac{\frac{P_{R_k} \left|h_{R_kD}\right|^2}{\sigma_D^2} \frac{P_S \left|h_{SR_k}\right|^2}{\sum_{j \in \mathcal{S}} \hat{\zeta}_j P_{R_j}}} {1 + \frac{P_{R_k} \left|h_{R_kD}\right|^2}{\sigma_D^2} + \frac{P_S \left|h_{SR_k}\right|^2}{\sum_{j \in \mathcal{S}} \hat{\zeta}_j P_{R_j}}}.
\eeqn
Next, we introduce two new variables $p_S$ and $p_{R_k}$, with $p_S = \sqrt{P_S}$ and $p_{R_k} = \sqrt{P_{R_k}}$. 
So, we can have the new optimization problem, given as
\begin{equation}
\label{EQN_OPTRS_3}
\begin{array}{l}
 {\mathop {\max }\limits_{p_S, \left\{p_{R_k}\right\}}} \quad \mathcal{\breve{C}}^{\sf coh}(p_S, \left\{p_{R_k}\right\})  \\ 
 \mbox{s.t.}\,\,\,\, \mathcal{\bar{I}}^{\sf coh}\left(p_S, \left\{p_{R_k}\right\}\right) \leq \mathcal{\overline I}_{P}, \\
 \quad \quad 0 \leq p_S \leq \sqrt{P_S^{\sf max}}, \\
 \quad \quad 0 \leq p_{R_k} \leq \sqrt{P_{R_k}^{\sf max}}, k \in \mathcal{S},\\
 \end{array}\!\!
\end{equation} 
where $\mathcal{\breve{C}}^{\sf non}_k(p_S, p_{R_k})$ is the objective function, i.e.
\beqn
\mathcal{\breve{C}}^{\sf non}_k(p_S, p_{R_k}) = \sum_{k \in \mathcal{S}} \frac{\frac{p_{R_k}^2 \left|h_{R_kD}\right|^2}{\sigma_D^2} \frac{p_S^2 \left|h_{SR_k}\right|^2}{\sum_{j \in \mathcal{S}} \hat{\zeta}_j p_{R_j}^2}} {1 + \frac{p_{R_k}^2 \left|h_{R_kD}\right|^2}{\sigma_D^2} + \frac{p_S^2 \left|h_{SR_k}\right|^2}{\sum_{j \in \mathcal{S}} \hat{\zeta}_j p_{R_j}^2}}.
\eeqn
We now perform analysis and characterization for the solution of this optimization problem in Lemmas~\ref{Lemma_coh1} and \ref{lem: co-convex}.

\begin{lemma} \label{Lemma_coh1} Problem (\ref{EQN_OPTRS_3}) is a nonconvex optimization problem for variables $\left(p_S,\left\{p_{R_k}\right\}\right)$. \end{lemma}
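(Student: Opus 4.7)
The strategy is to show nonconvexity of Problem~(\ref{EQN_OPTRS_3}) by exhibiting a direction along which the objective $\mathcal{\breve{C}}^{\sf coh}$ fails to be concave at an interior feasible point; since the box constraints are convex and the interference constraint can be made slack by choosing $\mathcal{\overline{I}}_P$ large, non-concavity of the objective on the feasible set is sufficient. The approach mirrors the treatment of Lemma~\ref{Lemma_noncoh1}, the only structural difference being the substitution $P_S \mapsto p_S^2$, $P_{R_k} \mapsto p_{R_k}^2$, which composes the rational objective of Problem~(\ref{EQN_OPTRS_1}) with the convex map $p \mapsto p^2$ and does not generally restore concavity.

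First I would reduce to a one-dimensional slice: fix $p_S$ and all $p_{R_j}$ for $j \neq k$ at strictly positive interior values and consider the restriction $\phi_k(p_{R_k})$ of the $k$-th summand of the objective. Writing $\alpha_k = |h_{R_kD}|^2/\sigma_D^2$, $\beta_k = |h_{SR_k}|^2$, and $c_k = \sum_{j \neq k} \hat\zeta_j p_{R_j}^2 > 0$ so that the inner denominator in $\phi_k$ is $\hat\zeta_k p_{R_k}^2 + c_k$, a Taylor expansion of $\phi_k$ around $p_{R_k}=0^+$ shows that $\phi_k(p_{R_k})$ behaves to leading order as a positive multiple of $p_{R_k}^2$. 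In particular $\phi_k(0)=0$, $\phi_k'(0)=0$ and $\phi_k''(0^+)>0$, certifying that the restricted objective has strictly positive curvature in a neighbourhood of the origin of $p_{R_k}$.

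Next I would extend the local non-concavity to the full objective, taking account of the other summands $\phi_j$, $j \neq k$, which also depend on $p_{R_k}$ through the $\hat\zeta_k p_{R_k}^2$ term in their inner denominators. Specialising to $|\mathcal{S}|=2$ and channel gains for which $\alpha_k \beta_k$ is much larger than $\alpha_j \beta_j$ for $j \neq k$, the positive contribution from $\phi_k''$ dominates any negative contribution from $\phi_j''$ at the chosen interior point, so $\sum_{j \in \mathcal{S}} \phi_j''(p_{R_k})$ remains strictly positive there. Since the test point lies in the interior of the box and the interference constraint can be made slack, $\mathcal{\breve{C}}^{\sf coh}$ is not concave on the feasible set, and the maximisation problem is therefore nonconvex.

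The main obstacle is the algebraic bookkeeping of the cross-coupling: $p_{R_k}$ enters every summand $\phi_j$ through the shared sum $\sum_l \hat\zeta_l p_{R_l}^2$, so writing out $\phi_j''(p_{R_k})$ explicitly involves several competing terms whose signs must be controlled uniformly over the chosen parameter regime. If this direct route becomes unwieldy, a logically equivalent fallback is to evaluate the Hessian of $\mathcal{\breve{C}}^{\sf coh}$ numerically on a small concrete instance—one source-destination pair with two relays and generic gains—and exhibit a positive eigenvalue; a single explicit counterexample suffices to conclude that Problem~(\ref{EQN_OPTRS_3}) is nonconvex in $(p_S,\{p_{R_k}\})$.
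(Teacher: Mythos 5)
The paper itself does not contain a proof of this lemma --- it is deferred to Appendix~E of the external technical report --- so a line-by-line comparison is not possible; the only in-document model is the proof of Lemma~\ref{Lemma_coherent_phase} in Appendix~\ref{Lemma_coherent_phase_proof}, which establishes nonconvexity by computing a diagonal Hessian entry and observing that it is not sign-definite. Your argument is a concrete, pointwise instantiation of exactly that idea: you exhibit a coordinate direction and a feasible point at which the second directional derivative of the objective has the wrong sign for concavity, which (together with slackening the interference constraint) is sufficient. The Taylor step is correct: with $c_k=\sum_{j\neq k}\hat\zeta_j p_{R_j}^2>0$ fixed, the $k$-th summand behaves as $\frac{\alpha_k\beta_k p_S^2/c_k}{1+\beta_k p_S^2/c_k}\,p_{R_k}^2+O(p_{R_k}^4)$ near $p_{R_k}=0$, giving strictly positive curvature. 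The one soft spot is the cross-coupling step, which you assert rather than verify: for $j\neq k$ one computes $\frac{d^2\phi_j}{dp_{R_k}^2}\big|_{0}=g_j'\!\left(u_j(0)\right)u_j''(0)<0$ with $g_j(u)=\frac{A_ju}{1+A_j+u}$, $A_j=\alpha_j p_{R_j}^2$, so the negative contributions scale with $A_j$ and can indeed be made negligible by taking $|h_{R_jD}|$ small for $j\neq k$ --- your dominance claim closes, but only after this bookkeeping, and exhibiting a single such instance is what the lemma requires. A cleaner variant that avoids the sign competition entirely is to differentiate along the $p_S$ direction instead: every summand is proportional to $p_S^2$ near $p_S=0$, so all contributions to the curvature are positive and no parameter regime needs to be engineered. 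The numerical-Hessian fallback is acceptable as a certificate of nonconvexity but should not be needed.
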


\begin{proof} The proof is provided in Appendix~E in \cite{Techreport}. \end{proof}

\begin{lemma} Given $p_S \in \left[0, \sqrt{P_S^{\sf max}}\right]$ and $p_{R_j} \in \left[0, \sqrt{P_{R_j}^{\sf max}}\right]$, $\forall j \in \mathcal{S} \backslash k$, problem (\ref{EQN_OPTRS_3}) is a convex optimization problem for variable $p_{R_k}$. Similarly, given $p_{R_k} \in \left[0, \sqrt{P_{R_k}^{\sf max}}\right]$ , $\forall k \in \mathcal{S}$, problem (\ref{EQN_OPTRS_3}) is also a convex optimization problem for variable $p_S$. \label{lem: co-convex}\end{lemma}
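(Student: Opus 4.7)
The plan is to mirror the argument sketched for Lemma~\ref{lem: nonco-convex}: fix every variable except the one of interest and verify that the resulting single-variable subproblem has a concave objective and a convex feasible set. The square-root substitution $p_S=\sqrt{P_S}$, $p_{R_k}=\sqrt{P_{R_k}}$ was introduced precisely to simplify the coherent interference $\mathcal{\bar{I}}^{\sf coh}$, whose ingredients $A$ and $B_k$ become low-degree polynomials in these variables; the proof should exploit this explicit polynomial structure throughout.

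First I would handle the objective. Split $\mathcal{\breve{C}}^{\sf coh}=\sum_{k'\in\mathcal{S}} T_{k'}$ and separate the ``self term'' $T_k$ from the ``cross terms'' $T_{k'}$, $k'\neq k$. Multiplying numerator and denominator of each $T_{k'}$ by $\sum_j \hat{\zeta}_j p_{R_j}^2$ clears the compound fractions; after substituting $x=p_{R_k}^2$ and freezing the other powers, the self term collapses to a rational function of $x$ of the form $\alpha x /(\beta x^2+\gamma x+\delta)$ with positive coefficients, while each cross term collapses to $\alpha'/(\beta' x+\gamma')$. Computing $d^2 T_{k'}/d p_{R_k}^2$ then reduces to checking the sign of an explicit polynomial in $p_{R_k}$, and the positivity of the coefficients (which follows from the physical nonnegativity of powers, channel gains, and self-interference coefficients $\hat{\zeta}_j$) should force each second derivative to be nonpositive on $[0,\sqrt{P_{R_k}^{\sf max}}]$. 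Summing the individually concave contributions then yields concavity of the whole objective in $p_{R_k}$.

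Next, for the feasible set, the box constraints $0\le p_{R_k}\le\sqrt{P_{R_k}^{\sf max}}$ are intervals and hence trivially convex. For the interference constraint, I would use the in-phase/anti-phase representation $\mathcal{\bar{I}}^{\sf coh}=\left(\sum_{k\in\mathcal{S}_{IP}\cup\{0\}}|B_k|-\sum_{l\in\mathcal{S}_{AP}}|B_l|\right)^2$ already derived in Section~\ref{Interference_Formu}. With every $p_{R_j}$, $j\neq k$, and $p_S$ held fixed, each $|B_k|$ is the modulus of a low-degree polynomial in $p_{R_k}$; showing that the resulting function of $p_{R_k}$ is convex (so that its sublevel set is a convex interval) is the second essential step. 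The parallel statement for $p_S$ follows by the symmetric argument, and is actually simpler because $p_S$ does not appear in the self-interference denominator $\sum_j \hat{\zeta}_j p_{R_j}^2$ that couples the cross terms.

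The main obstacle is the concavity verification for the self term $T_k$. As a function of $x=p_{R_k}^2$ the numerator grows linearly while the denominator grows quadratically, so $T_k$ vanishes both at $x=0$ and as $x\to\infty$; establishing that its second derivative in $p_{R_k}$ (rather than in $x$) stays nonpositive on the entire feasible interval requires a careful coefficient comparison rather than an abstract monotonicity argument, and is the step where the positivity of $\alpha,\beta,\gamma,\delta$ has to be pushed all the way through. Once this bound is secured, the cross terms contribute additional concavity, and the constraint analysis closes the proof via the standard sublevel-set characterization of convex feasibility.
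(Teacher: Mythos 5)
The paper itself contains no proof of Lemma~\ref{lem: co-convex} --- it is deferred to Appendix~G of the external technical report --- so there is no in-paper argument to compare your route against. Judged on its own terms, your sketch has a concrete gap at its central step: the claimed termwise concavity of the objective in $p_{R_k}$ is false. Write $x=p_{R_k}^2$, $a=|h_{R_kD}|^2/\sigma_D^2$, $b=p_S^2|h_{SR_k}|^2$, $c=\sum_{j\neq k}\hat{\zeta}_jp_{R_j}^2$. Your self term is indeed
\begin{equation*}
T_k=\frac{abx}{a\hat{\zeta}_kx^2+(\hat{\zeta}_k+ac)\,x+(b+c)},
\end{equation*}
but near $p_{R_k}=0$ this behaves as $\tfrac{ab}{b+c}\,p_{R_k}^2$, so $d^2T_k/dp_{R_k}^2\to 2ab/(b+c)>0$: the self term is \emph{strictly convex} at the origin, and positivity of the coefficients works against you rather than for you. (Notably, in the original variable $P_{R_k}$ the same term \emph{is} locally concave at $0$; it is precisely the square-root substitution $p_{R_k}=\sqrt{P_{R_k}}$ that destroys this.) The cross terms do not rescue the argument either: each has the form $A/(Bp_{R_k}^2+C)$ with $A,B,C>0$, whose second derivative is $2AB(3Bp_{R_k}^2-C)/(Bp_{R_k}^2+C)^3$, i.e.\ they turn convex once $p_{R_k}^2>C/(3B)$. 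So neither the self term nor the cross terms are individually concave on an arbitrary box $[0,\sqrt{P_{R_k}^{\sf max}}]$, and concavity of the sum would require a genuine cancellation argument (or a restriction on the parameters/domain) that you have not supplied.

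The constraint half of your plan is also not watertight. From the in-phase/anti-phase form $\mathcal{\bar{I}}^{\sf coh}=\bigl(\sum_{k\in\mathcal{S}_{IP}\cup\{0\}}|B_k|-\sum_{l\in\mathcal{S}_{AP}}|B_l|\bigr)^2$, establishing convexity of each $|B_k|$ in $p_{R_k}$ does not give convexity of $\mathcal{\bar{I}}^{\sf coh}$: the anti-phase sum enters with a minus sign, a difference of convex functions need not be convex, and the sublevel set $\{|g-h|\le\sqrt{\mathcal{\bar{I}}_P}\}$ of such a difference need not be an interval. You would need either to show the bracketed difference is affine/monotone in $p_{R_k}$ or to argue the sublevel set directly; as written, the step you call ``the second essential step'' is asserted rather than proved. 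Both gaps would need to be closed before this qualifies as a proof.
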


\begin{proof} The proof is provided in Appendix~G in \cite{Techreport}. \end{proof}

According to Lemma \ref{lem: co-convex}, we develop the alternative optimization to determine the solutions of (\ref{EQN_OPTRS_3}).
In particular, we can divide the problem into multiple sub-problems, which are convex optimization problems and their solutions can be determined using standard optimization methods \cite{Boyd04}. 
We employ the multi-update algorithm to solve the problem, where we can update the transmit powers for multiple relays at each step.
Simulation results in Section~\ref{Results} validate and confirm the robustness and benefit gain in CPU-time reduction of this proposed algorithm.

\section{Numerical Results}
\label{Results}

\begin{table} 
\centering
\caption{System rate vs $\mathcal{\bar I}_P$ ($P_{\sf max} = 20 dB$, $\zeta$ = 0.01, Co: Coherent scenario, NCo: Non-coherent scenario)}
\label{table1}
\scriptsize
\begin{tabular}{|c|c|c|c|c|c|c|c|}
\hline 
\multicolumn{2}{|c|}{$\mathcal{\bar I}_P$ (dB)}     
       & 0   & 2   & 4   & 6   & 8 & 10\tabularnewline
\hline 
\hline 
          & Optimal & 7.6741  &  7.6818  &  7.6840 & 7.6847  & 7.6865 &  7.6873  \tabularnewline
\cline{2-8} 
Co      & Greedy 1 & 7.6655 & 7.6721 & 7.6751 & 7.6752 & 7.6752 & 7.6753   \tabularnewline
\cline{2-8} 
          & Greedy 2  & 7.6565 &  7.6595 & 7.6608 & 7.6608 & 7.6610 & 7.6652  \tabularnewline
\cline{2-8} 
          & $\Delta \mathcal{C}_1 (\%)$ & 0.1121  &  0.1263  &  0.1158  &  0.1236  &  0.1470  &  0.1561    \tabularnewline
\cline{2-8} 
          & $\Delta \mathcal{C}_2 (\%)$ & 0.2293  &  0.2903  &  0.3019  &  0.3110  &  0.3318  &  0.2875   \tabularnewline
\hline 
\hline 
        & Optimal & 4.4638 &   4.8985  &  5.3447  &  5.4611  &  5.5358  &  5.7043  \tabularnewline
\cline{2-8} 
NCo  & Greedy 1  & 4.4500  &  4.8902  &  5.2902  &  5.4043  &  5.5269  &  5.7026  \tabularnewline
\cline{2-8} 
        & Greedy 2 & 4.4254  & 4.8312 & 5.2507 & 5.4013 & 5.4967 & 5.6704  \tabularnewline
\cline{2-8} 
          & $\Delta \mathcal{C}_1 (\%)$ & 0.3092  &  0.1694  &  1.0197  &  1.0401  &  0.1608  &  0.0298 \tabularnewline
\cline{2-8} 
          & $\Delta \mathcal{C}_2 (\%)$ & 0.8603  &  1.3739  &  1.7588  &  1.0950  &  0.7063  &  0.5943  \tabularnewline
\hline
\end{tabular}
\end{table}

\begin{figure}[!t]
\centering
\includegraphics[width=70mm]{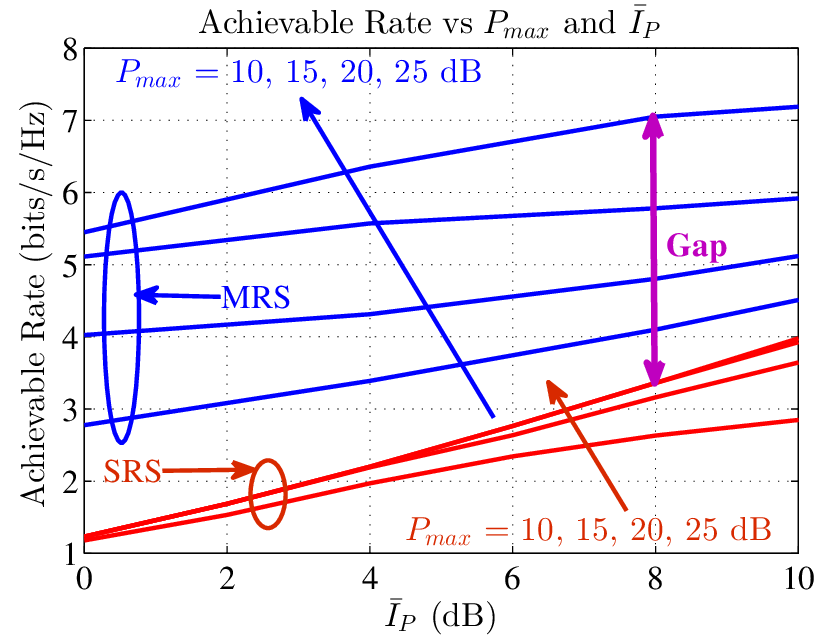}
\caption{System rate versus the interference constraint $\mathcal{\bar I}_P$
 for $K = 4$, $\zeta = 0.001$, $P_{\sf max} = \left\{10, 15, 20, 25\right\}$ dB, non-coherent scenario, and both single- and multi-relay selections.}
\vspace{-10pt}
\label{1_Rate_vs_I_bar_P_P_max_10152025_zeta_0001_MS_Non}
\end{figure}

The parameter setting of the FDUCRN is summarized as follows. 
We set the noise power at each node to one.
We assume that the 5G communication link follows a Rayleigh fading with variance one (i.e. $\sigma_{SR_k}$ = $\sigma_{R_kD}$ = 1), except the $S-D$ link (i.e. $\sigma_{SD}$ = 0.1) and the links of the SU relay $i$-SU relay $j$ (i.e. $\sigma_{R_iR_j} \in \left[0.5, 1\right]$).
We have the similar assumption to the links of the SU relay-radar/PU receiver and SU source-radar/PU receiver with corresponding variances $\left\{\sigma_{SP}, \sigma_{R_kP}\right\} \in \left[0.8, 1\right]$.
We also assume that we have the same impact of imperfect channel estimation for all relays, i.e. $\zeta$.

In the first experiment, we compare our proposed algorithms' system rates to those of the optimal brute-force mechanism so that we can show the efficacy of our developed algorithms.
Table~\ref{table1} show the rate performances of the proposed and optimal algorithms for both coherent and non-coherent scenarios, given the following parameter setting and scenarios, $\mathcal{\bar I}_P = \left\{0, 2, 4, 6, 8, 10\right\}$ dB, $\zeta$ = 0.01, 4 SU relays and $P_{\sf max} = 20$ dB.
In particular, we consider the two algorithms, where we optimize one transmit power variable of one relay for one step in the first algorithm, \textbf{Greedy 1}, and simultaneously optimize two transmit power variables of two relays in the second algorithm, \textbf{Greedy 2}. 
It is easily seen that there are slightly different between the rate performances of our proposed algorithms and the optimal solutions for both coherent and non-coherent scenarios (i.e. the errors are almost always much lower than 1\%).
Moreover, \textbf{Greedy 2} performs good results while it saves the time consuming due to performing the two-variable update.

We also see that the coherent mechanism always outperforms the non-coherent mechanism because we carefully regulate the phase to reduce the interference at the radar system caused by the 5G transmissions.  As a result, this allows to increase the transmit powers for the 5G commercial networks and hence it can achieve the higher system rates. 
As expected, when QSIC at the FD relay increases (i.e. the self-interference increases) then the the achievable rate decreases.
Now, we compare the achievable rates of our proposed FDUCRN with multi-relay selection (MRS) and the FDUCRN with single-relay selection (SRS) \cite{Tan17a} in Fig.~\ref{1_Rate_vs_I_bar_P_P_max_10152025_zeta_0001_MS_Non}. 
Note that only the best relay is selected in the FDUCRN with SRS.
Here, we consider the FDUCRN $4$ SU relays with $\zeta$ = 0.001 for the non-coherent scenario. 
It is easily observed that our proposed FDUCRN with MRS significantly outperforms the FDUCRN with SRS.
More results are presented in \cite{Techreport}.

\section{Conclusions and Future Directions}
\label{conclusion} 

\subsection{Conclusions}
This paper studied cohabitation of military radar and commercial communication networks.
We derived the joint power control and multi-relay selection in FDUCRNs, which is used for the commercial communication network such that it can opportunistically transmit data over the licensed spectrum bands of military radar network. 
In particular, we formulated the rate optimization problem and performed analysis of the 5G system rate under the interference constraints.
We then proposed efficient algorithms to configure the optimal network parameters.
Our design and analysis have considered both the FD communication capability and the self-interference of transceiver at the 5G relays.
Especially, we have investigated the interference suppression at the radar system by using both coherent and non-coherent cases.
Extensive numerical results have been presented to demonstrate the significant gains of our coherent method to the interference suppression at the radar system and the self-interference impacts to the 5G network. 

\subsection{Future Directions}

We highlight the future directions based on our current results as follows:

\noindent \textbf{AI-based Spectrum Sensing for Identification of Radar and 5G/4G Signals.}
We consider the scenario of the coexistence of the federal and commercial wireless spectrum. 
In particular, we consider the main coverage of radar communication, while the 5G network opportunistically accesses the spectrum by using the cooperative underlay cognitive access.
We derive the AI-based sensing platform, which can sense the spectrum bands as well as identify/classify the signals transmitting over the bands (which are the radar signal or 5G signal or noise only). 
The results would be used for the access part, which is implemented in the subsequent section.
We also investigate the more complicated scenario, where there are multiple active high-priority users, including not only radar but also other 5G and LTE.
We extend to develop the cooperative compressive spectrum sensing, which consists of centralized sensing and distributed sensing \cite{Tan2010, Tan2010a, le2014joint}. 
Finally, we consider the classification problem that the fusion center makes the final classification of the channel availability according to the receiving raw sensor data from sensor nodes. 
This can be solved by using machine learning (ML) techniques, such as the unsupervised ML, supervised ML and semi-supervised ML \cite{Tan18, Tan19, Zahin19, Pervej20}. 

\noindent \textbf{AI-based Dynamic Spectrum Sharing And Cohabitation with Identification and Classification of Spectrum Parameter.}
We will develop the intelligence and learning in spectrum sensing and sharing for cognitive 5G networks, which involve spectrum sensors to provide sensing information and distributed agents to provide spectrum availability information as well as intelligent MAC protocol mechanisms for efficient access. It means that we aim for ensuring the effective spectrum utilization and guaranteeing the transmission quality. In addition, we will develop an intelligent MAC protocol \cite{Tan2016RA, le2014joint}, which can determine the strategies of UEs depending on dynamic environments, i.e. switching between the interweave, underlay and overlay paradigms. 
In the case of an escape from interweave paradigm, we determine whether the UEs will switch between the overlay and underlay paradigms depending on the density of high-priority users and the surrounding environment.
Here, we utilize the AI-enabled automatic modulation recognition \cite{Nandi1998} for supporting resource allocation to ensure QoS, whilst eliminating interference to high-priority users. 
Besides using the deep reinforcement learning \cite{Tan2018DRL, Tan2019DRL}, we also exploit the recent advanced machine learning techniques to support for changing from interweave to underlay and/or overlay. 
For instance, we utilize the multiple agents for different channels in muti-channel scenarios and incorporate with the LSTM-based estimation \cite{Pervej20, Wang2020} of frequency hopping patterns from the other low-priority users. 
We also develop an end-to-end convolutional neural network \cite{Zahin19} to automatically learn features directly from simple representations without requiring expert features. 
The novelty of our proposal is not only contributing to the performance enhancement but also providing the adaptive framework, when compared to the existing static MAC protocol designs.
As a result, this will contribute to improve the communication part for the EdgeAI system \cite{Tan2018DRL, Tan2019DRL, Tan18, Tan2024, Tan19, Wang2020, Letaief2022, Mao2024}.

\section*{Acknowledgment}
This work was supported in part by the Commonwealth Cyber Initiative (CCI) Experiential Learning Program and the South Big Data Innovation Hub Partnership Nucleation program.
\vspace{-2pt}
\bibliographystyle{IEEEtran}


\appendices

\section{Proof of Lemma~ \ref{Lemma_coherent_phase}}
\label{Lemma_coherent_phase_proof}
\vspace{5pt}

Firstly, we find the Hessian matrix of $\mathcal{\bar{I}}^{\sf coh}\left(P_S, \left\{P_{R_k}\right\}, \left\{\phi_k\right\}\right)$, which is presented as $\mathcal{H}_I = \left[\mathcal{H}_{ki}\right]$, where $k, i \in \mathcal{S}$.
Here $\mathcal{H}_{ki} = \frac{\partial^2 \mathcal{\bar{I}}^{\sf coh}}{\partial \phi_k \partial \phi_i}$.

According to the Sylvester's criterion \cite{Horn12}, the Hessian matrix $\mathcal{H}_I$  is positive definite iff $\mathcal{H}_{11} > 0$ and the determinant of $\mathcal{H}_I$ ${\sf det} \left(\mathcal{H}_I\right) > 0$. 
However we can prove that this condition is not satisfied as follows (i.e. $\mathcal{H}_{11}$ is not always larger than zero).
We take first-order partial derivative of $\mathcal{\bar{I}}^{\sf coh}\left(P_S, \left\{P_{R_k}\right\}, \left\{\phi_k\right\}\right)$ with respect to $\phi_1$
\beqn
\frac{\partial \mathcal{\bar{I}}^{\sf coh}}{\partial \phi_1} = 2 \sum_{i \in \mathcal{\tilde{S}} \backslash \left\{1\right\}} \left|\tilde{B}_k\right| \left|\tilde{B}_i\right| {\sf Sin} \left(\phi_{B_k}-\phi_{B_i}+\phi_i-\phi_k\right)
\eeqn
Then, $\mathcal{H}_{11}$ which is the second-order partial derivative of $\mathcal{\bar{I}}^{\sf coh}\left(P_S, \left\{P_{R_k}\right\}, \left\{\phi_k\right\}\right)$ with respect to $\phi_1$ can be determined as
\beqn
\frac{\partial^2 \mathcal{\bar{I}}^{\sf coh}}{\partial \phi_1^2} = -2\!\!\! \sum_{i \in \mathcal{\tilde{S}} \backslash \left\{1\right\}}  \!\!\!\left|\tilde{B}_k\right| \left|\tilde{B}_i\right| {\sf Cos} \left(\phi_{B_k}-\phi_{B_i}+\phi_i-\phi_k\right)
\eeqn

We can easily observe that $\mathcal{H}_{11} = \frac{\partial^2 \mathcal{\bar{I}}^{\sf coh}}{\partial \phi_1^2}$ is not always larger than zero.
Hence, the function $\mathcal{\bar{I}}^{\sf coh}\left(P_S, \left\{P_{R_k}\right\}, \left\{\phi_k\right\}\right)$ is not the strictly convex function.
Thus, the problem (\ref{EQN_OPT_PHI}) is not the strictly convex optimization problem.

\end{document}